\newtheorem{theorem}{Theorem}
\def \red #1 {\textcolor{red}{#1}}
\def\tr{\mathop{\mathrm{tr}}}
\theoremstyle{definition}
\newtheorem{lemma}{Lemma}
\newtheorem{remark}{Remark}
\algrenewcommand\alglinenumber[1]{\sf\scriptsize\color{blue}{#1}}
\algrenewcommand\algorithmicrequire{\textbf{Input:}}
\algrenewcommand\algorithmicensure{\textbf{Output:}}
\begin{document}

\title{Direct reconstruction of the quantum density matrix elements with classical shadow tomography}

\author{Yu Wang}
\email[]{wangyu@bimsa.cn}
\affiliation{Beijing Institute of Mathematical Sciences and Applications, Beijing 101408, China}



\begin{abstract}
We introduce a direct estimation framework for reconstructing multiple density matrix elements of an unknown quantum state using classical shadow tomography. Traditional direct measurement protocols (DMPs), while effective for individual elements, suffer from poor scalability due to post-selection losses and the need for element-specific measurement configurations. In contrast, our method—DMP-ST—leverages random Clifford or biased mutually unbiased basis measurements to enable global estimation: a single dataset suffices to estimate arbitrary off-diagonal entries with high accuracy. 
We prove that estimating \(K\) off-diagonal matrix elements up to additive error \(\epsilon\) requires only \(\mathcal{O}(\log K/\epsilon^2)\) samples, achieving exponential improvement over conventional DMPs. 
The number of required measurement configurations can also be exponentially reduced for large K. 
When extended to full state tomography, DMP-ST attains trace distance error \(\le \epsilon\) with sample complexity \(\mathcal{O}(d^3 \log d/\epsilon^2)\), which is closed to the optimal scaling for single-copy measurements. Moreover, biased MUB measurements reduce sample complexity by a constant factor than random Clifford measurements. 
This work provides both theoretical guarantees and explicit protocols for efficient, entrywise quantum state reconstruction. It significantly advances the practicality of direct tomography, especially for high-dimensional systems and near-term quantum platforms.
 
\end{abstract}

\maketitle

\section{Introduction}

Quantum states are fundamentally described by density matrices, which encapsulate all statistical properties of a system. In quantum mechanics, any strong projective measurement irreversibly collapses the measureed quantum state, discarding valuable information about the original system. Consequently, characterizing an unknown quantum state often requires repeated preparation and measurement, making sample efficiency a central concern in quantum information science~\cite{paris2004quantum,nielsen2010quantum}.

A key goal in this context is the estimation of density matrix elements, especially off-diagonal terms that encode quantum coherence and entanglement~\cite{friis2019entanglement,streltsov2017colloquium}. Traditional quantum state tomography (QST) reconstructs the full density matrix via overcomplete measurements and convex optimization~\cite{gross2010quantum,flammia2012quantum}, incurring sample complexity that grows rapidly with the system dimension \( d \). While accurate, such methods are resource-intensive and often impractical for near-term quantum devices.

To address this, direct measurement protocols (DMPs)~\cite{lundeen2011direct,lundeen2012procedure,salvail2013full,bamber2014observing,shi2015scan,thekkadath2016direct,bolduc2016direct,vallone2016strong,calderaro2018direct,pan2019direct,zhang2020direct,xu2024resource} have been proposed to estimate specific entries of the density matrix without full tomography. These methods typically employ weak or strong measurements coupled with ancillary probes and targeted unitary operations, enabling constant-cost access to individual matrix elements. However, DMPs face two fundamental limitations: (1) many measurement outcomes for a single off-diagonal element $\rho_{jk}$ are discarded due to post-selection, reducing sample efficiency; and (2) each new off-diagonal element generally requires a distinct measurement configuration, making it impossible to reuse data from the past measurement results.

Recently, classical shadow tomography~\cite{huang2020predicting} (building on prior work by Aaronson \cite{aaronson2018shadow}) has emerged as a powerful tool for predicting multiple properties of an unknown quantum state from a shared dataset. By applying randomized measurements followed by classical post-processing, it enables efficient estimation of expectation values for a large set of observables with sample complexity scaling only logarithmically in the number of targets. Shadow-based methods have been successfully applied to a broad range of quantum information tasks, including fidelity estimation, entropy and Hamiltonian learning~\cite{kokail2021entanglement,hadfield2022measurements}, error mitigation~\cite{seif2023shadow,jnane2024quantum,zhao2024group}, and quantum machine learning~\cite{huang2022provably,jerbi2024shadows}, among others. 

In this work, we develop a hybrid framework—direct measurement protocols via shadow tomography (DMP-ST)—that leverages classical shadow techniques to directly estimate density matrix elements. Unlike conventional DMPs, DMP-ST allows all measurement outcomes to be reused across different target elements, offering significantly improved sample efficiency and reduced experimental overhead. Our framework supports both random Clifford measurements and biased mutually unbiased bases (MUBs), with the latter providing further improvements in sampling cost and hardware simplicity.

We prove that estimating \(K\) off-diagonal density matrix elements to additive error \(\epsilon\) requires only \(\mathcal{O}(\log K / \epsilon^2)\) samples, improving exponentially over the \(\mathcal{O}(K/\epsilon^2)\) cost of traditional DMPs in the worst case. Furthermore, we show that reconstructing all \(d^2\) matrix elements with per-entry error \(\epsilon_1 = \mathcal{O}(\epsilon/d^{3/2})\) guarantees trace distance error \(\le \epsilon\), achieving full quantum state tomography with total sample complexity
\[
\mathcal{O}\left( \frac{d^3 \log d}{\epsilon^2} \right),
\]
which matches the best-known upper bound for single-copy measurement schemes using approximate 4-designs~\cite{kueng2017low}, and approaches the tight lower bound of \(\Omega\left( \frac{d^3}{\epsilon^2} \right)\)  established by~\cite{chen2023does}.

In contrast to standard QST techniques—such as linear inversion, semidefinite programming, or maximum likelihood estimation—our DMP-ST protocol enables direct, entrywise estimation using lightweight post-processing. 
The DMP-ST framework thus offers a new, scalable route toward efficient quantum state reconstruction, balancing theoretical guarantees with practical implementation.

\section{Sample complexity of direct measurement protocols}

\subsection{Diagonal Elements Estimation}
In quantum state characterization, the diagonal elements \(\rho_{jj}\) of a density matrix \(\rho\) correspond to the outcome probabilities of performing projective measurements in the computational basis. Specifically, according to Born’s rule, the probability of obtaining outcome \(j\) when measuring \(\rho\) in the basis \(\{|j\rangle\}\) is exactly \(\rho_{jj}\).  

As such, these diagonal entries can be directly estimated by repeating the computational basis measurement and recording the relative frequencies of each outcome. While this estimation is operationally simple, it remains important to analyze the number of samples required to achieve a desired accuracy and confidence. In particular, we are interested in the sample complexity for estimating either a single diagonal entry or multiple entries simultaneously with additive error \(\epsilon\) and failure probability at most \(\delta\). The following lemma quantifies this complexity.

\begin{theorem}
\label{lemma:diagonal_sample_complexity}
Let \(\rho\) be a \(d \times d\) density matrix. To estimate its diagonal elements with additive error \(\epsilon\) and failure probability at most \(\delta\), the required number of samples is:
\begin{enumerate}
    \item For estimating a single diagonal element \(\rho_{jj}\):
    \begin{equation}
        N \geq \frac{\ln(2/\delta)}{2\epsilon^2}.
    \end{equation}
    \item For estimating \(K\) different  diagonal elements \(\{\rho_{j_1 j_1}, \dots, \rho_{j_K j_K}\}\) simultaneously:
    \begin{equation}
        N \geq \frac{\ln(2K/\delta)}{2\epsilon^2}.
    \end{equation}
\end{enumerate}
\end{theorem}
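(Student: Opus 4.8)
The plan is to reduce both statements to a standard concentration inequality for sums of bounded i.i.d.\ random variables. First I would invoke Born's rule: measuring \(\rho\) in the computational basis \(\{|j\rangle\}\) returns outcome \(j\) with probability exactly \(\rho_{jj}\). Repeating the measurement \(N\) times yields i.i.d.\ indicators \(X_i^{(j)}\in\{0,1\}\), with \(X_i^{(j)}=1\) iff the \(i\)-th measurement produces \(j\), and the natural estimator is the relative frequency \(\hat\rho_{jj}=\frac1N\sum_{i=1}^{N}X_i^{(j)}\), which is unbiased since \(\EE[X_i^{(j)}]=\rho_{jj}\).

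For the single-element case, since each \(X_i^{(j)}\) takes values in \([0,1]\), Hoeffding's inequality gives \(\Pr[\,|\hat\rho_{jj}-\rho_{jj}|\ge\epsilon\,]\le 2\exp(-2N\epsilon^2)\). Requiring the right-hand side to be at most \(\delta\) and solving for \(N\) yields \(N\ge \ln(2/\delta)/(2\epsilon^2)\), which is claim (1).

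For the simultaneous estimation of \(K\) entries, I would take a union bound over the \(K\) failure events \(\{|\hat\rho_{j_k j_k}-\rho_{j_k j_k}|\ge\epsilon\}\). Although the outcome counts of distinct computational-basis values are correlated (they follow a multinomial law), the union bound needs no independence, and each marginal frequency still obeys the Hoeffding tail bound above. Hence the probability that some estimate deviates by more than \(\epsilon\) is at most \(2K\exp(-2N\epsilon^2)\); setting this \(\le\delta\) and solving gives \(N\ge \ln(2K/\delta)/(2\epsilon^2)\), which is claim (2).

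There is no genuine obstacle here: the argument is a textbook Hoeffding-plus-union-bound estimate. The only point worth a remark is the correlation between distinct diagonal estimators, which is harmless for the union bound; one could also note that a multiplicative Chernoff or Bernstein bound exploiting the variance \(\rho_{jj}(1-\rho_{jj})\le 1/4\) would improve the constant, but the stated bound deliberately uses the simplest dimension-free form.
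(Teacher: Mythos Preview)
Your proposal is correct and follows essentially the same route as the paper: Born's rule to identify $\rho_{jj}$ as the outcome probability, the empirical frequency as unbiased estimator, Hoeffding's inequality for the single-element bound, and a union bound over the $K$ Hoeffding events for the simultaneous case. Your explicit remark that the union bound requires no independence among the (multinomially correlated) indicator counts is a nice clarification the paper leaves implicit, but otherwise the arguments coincide.
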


\begin{proof}
To estimate diagonal elements of \(\rho\), we perform projective measurements in the computational basis \(\{|j\rangle\langle j|\}_{j=0}^{d-1}\).  
According to Born's rule, the probability of obtaining outcome \(j\) is  
\[
p_j = \mathrm{tr}(\rho |j\rangle\langle j|) = \rho_{jj}.
\]

Suppose we perform \(N\) independent measurements totally. Let \(n_j\) denote the number of times outcome \(j\) is observed. Then the empirical frequency  
\[
\hat{p}_j = \frac{n_j}{N}
\]
serves as an unbiased estimator of \(\rho_{jj}\).

\textit{(1) Single diagonal element.}  
By Hoeffding’s inequality, the probability that \(\hat{p}_j\) deviates from \(\rho_{jj}\) by more than \(\epsilon\) is bounded as
\[
\mathbb{P}(|\hat{p}_j - \rho_{jj}| \geq \epsilon) \leq 2 \exp(-2N\epsilon^2).
\]
To ensure this failure probability is at most \(\delta\), it suffices to require
\[
2 \exp(-2N\epsilon^2) \leq \delta,
\]
which yields
\[
N \geq \frac{\ln(2/\delta)}{2\epsilon^2}.
\]

\textit{(2) Multiple diagonal elements.}  
Now consider simultaneously estimating \(K\) different diagonal elements \(\rho_{j_1 j_1}, \dots, \rho_{j_K j_K}\).  
We require that all estimates \(\hat{\rho}_{j_t j_t}\) deviate from the true values by at most \(\epsilon\), with overall failure probability at most \(\delta\).

By the union bound,
\begin{align*}
\mathbb{P}\Big(\exists\, t,\,
  \big|\hat{\rho}_{j_t j_t} - \rho_{j_t j_t}\big| \geq \epsilon\Big)
&\leq \sum_{t=1}^K \mathbb{P}\left(\big|\hat{\rho}_{j_t j_t} - \rho_{j_t j_t}\big| \geq \epsilon\right) \\
&\leq 2K \exp(-2N\epsilon^2).
\end{align*}
To ensure the total failure probability is at most \(\delta\), we require
\[
2K \exp(-2N\epsilon^2) \leq \delta.
\]

which implies
\[
N \geq \frac{\ln(2K/\delta)}{2\epsilon^2}.
\]
\end{proof}

\begin{remark}
    
Thus, estimating a single diagonal element requires a sample complexity scaling as \(\mathcal{O}(1/\epsilon^2)\), whereas estimating \(K\) diagonal elements introduces only a logarithmic overhead, resulting in a sample complexity of \(\mathcal{O}(\log K / \epsilon^2)\).  

At first glance, this may seem counterintuitive: one might expect that estimating \(K\) independent quantities would require sample complexity scaling linearly with \(K\). However, this intuition overlooks an important structural advantage—all \(K\) quantities are estimated from the same set of measurement outcomes. The union bound is then applied to control the worst-case failure probability across all \(K\) estimates, leading to the appearance of the \(\log K\) factor.  

This is a compelling illustration of the statistical efficiency of global estimation: by leveraging shared measurement data and controlling the total failure probability collectively, one can achieve accurate estimation of multiple quantities with only a modest increase in resources. The result is both theoretically sound and practically powerful.

\end{remark}

\subsection{Off-diagonal Elements Estimation}
While diagonal elements of a density matrix can be efficiently estimated using simple projective measurements in the computational basis, estimating off-diagonal elements typically requires more specialized procedures.  
These off-diagonal entries, which encode quantum coherences between different basis states, play a central role in characterizing key quantum phenomena such as entanglement~\cite{friis2019entanglement,horodecki2009quantum} and decoherence~\cite{streltsov2017colloquium,ringbauer2018certification}.  
This makes their accurate estimation not only technically challenging but also physically essential.  
Direct measurement protocols are specifically designed to address this challenge~\cite{lundeen2011direct,lundeen2012procedure,salvail2013full,bamber2014observing,shi2015scan,thekkadath2016direct,bolduc2016direct,vallone2016strong,calderaro2018direct,pan2019direct,zhang2020direct,xu2024resource,feng2021direct,wang2024direct}. 
Unlike conventional quantum state tomography, which typically requires at least \(\mathcal{O}(d)\) different measurement operators and substantial classical post-processing to reconstruct the entire density matrix, DMPs enable the direct estimation of arbitrary single off-diagonal element using only a constant number of measurement settings, which are equivalent to a constant types of unitary transformations followed by projective measurement in the computational basis. 
Importantly, since the density matrix \(\rho\) is Hermitian, we have \(\rho_{jk} = \rho_{kj}^*\).  
Therefore, matrix elements at symmetric positions are not independent and can be treated as equivalent, meaning it suffices to estimate only one of them.

However, aside from such symmetric pairs, estimating different off-diagonal elements often requires entirely different measurement setups, which breaks the possibility of reusing the same measurement resources. This issue arises in all known direct measurement protocols. For example:

In phase-shifting schemes \cite{feng2021direct}, each off-diagonal element corresponds to a distinct phase-sensitive interference pattern. Therefore, estimating a new target element—regardless of its position—requires a new measurement configuration, with no possibility of reusing previous settings. As a result, the total number of distinct measurement settings needed to access all density matrix elements scales as \( \mathcal{O}(d^2) \).

In strong-measurement-based schemes \cite{calderaro2018direct}, each off-diagonal element \( \rho_{jk} \) is extracted by coupling the main system with two ancillary qubits through a specific interaction that depends on the indices \( j \) and \( k \). Crucially, when the new target element \( \rho_{j'k'} \) does not lie in the same row or column as the previously measured element, the coupling configuration between the main system and the first ancilla qubit must be modified. The total number of measurement settings required for full density matrix estimation is \( \mathcal{O}(d) \).

In direct measurement schemes based on dense dual bases \cite{wang2024direct}, each matrix element \( \rho_{jk} \) is associated with a superposition state such as \( |j\rangle + |k\rangle \) and \( |j\rangle + i|k\rangle \). When new elements correspond to superpositions $j', k'$ belonging to different orthonormal bases in the DDB family, the measurement basis must be switched accordingly. The total number of measurement settings needed to reconstruct all density matrix elements also scales as \( \mathcal{O}(d) \).

In all these cases, the estimation procedures for different matrix elements are inherently incompatible. As a result, estimating \( K \) arbitrary off-diagonal elements would require \( \mathcal{O}(K) \) different measurement configurations, leading to a sample complexity that scales as \( \mathcal{O}\left(\frac{K \ln \frac{1}{\delta}}{\epsilon^2 }\right) \) in the worst case. Each measurement setting must be repeated \( \mathcal{O}(\ln \frac{1}{\delta}/(\epsilon^2 )) \) times to achieve additive error \( \epsilon \) in estimating probabilities of the form \( \operatorname{tr}(U \rho U^{\dagger} |k\rangle\langle k|) \), where \( U \) is the unitary operation specified by the direct measurement protocol, and \( |k\rangle \) is the post-selected basis state measured on the main system. 

\section{Estimating multiple density matrix elements with classical shadow tomography}

In contrast, classical shadow tomography \cite{huang2020predicting} supports global estimation by design: a single batch of measurement data can be reused to estimate a large collection of observables \( O_1, O_2, \dots, O_K \), with sample complexity of \( \mathcal{O}(\log K / \epsilon^2) \). However, the class of observables that can be efficiently estimated depends on the specific structure related to the choice of randomized measurements. 
 
This exponential efficiency is analogous to the diagonal case, where repeated measurements in the computational basis allow simultaneous estimation of multiple diagonal entries. Similarly, shadow tomography exploits the global reuse of measurement outcomes to estimate many observables with collective error guarantees, yielding a sample complexity that scales only logarithmically with the number of targets $K$. 

 In the following, we introduce the classical shadow framework and demonstrate how it can be tailored to efficiently and directly estimate individual matrix elements. 

\subsection{Classical Shadow Tomography Framework}
Specifically, classical shadow tomography \cite{huang2020predicting} is an efficient protocol for predicting many properties of an unknown quantum state \(\rho\). The process involves randomly sampling a unitary transformation \(U_k\) from an informationally complete ensemble \(\mathcal{U} = \{U_k\}\), evolving the state \(\rho \to U_k \rho U_k^\dagger\), and performing projective measurements in the computational basis. Each measurement outcome \(j\) corresponds to a classical snapshot, reconstructed via the inverse measurement channel \(\mathcal{M}^{-1}\):
\[
\hat{\rho}_j = \mathcal{M}^{-1}(U_k^\dagger |j\rangle \langle j| U_k).
\]
Repeating the procedure \( N \) times generates a set of \( N \) independent snapshots, \( S(\rho, N) = \{\hat{\rho}_1, \hat{\rho}_2, \ldots, \hat{\rho}_N\} \), referred to as the classical shadow of \( \rho \). The classical shadow enables efficient prediction of properties such as \( \{\mbox{tr}(O_i \rho)\} \) for observables \( \{O_i\} \). By splitting \( S(\rho, N) \) into equally sized chunks and employing the median-of-means estimation technique, we can mitigate the effects of outliers while ensuring rigorous performance guarantees.

For \(K\) target observables \(\{O_1, \dots, O_K\}\), the sample complexity satisfies:
\begin{equation}\label{eq:sample-general}
N = O\left(\frac{\log (K/\delta)}{\epsilon^2} \max_i \|O_i\|_{\text{shadow}}^2 \right),    
\end{equation}
where \(\|O_i\|_{\text{shadow}}\) depends on the choice of random unitary ensemble and the target observables. This guarantees an additive error \(\epsilon\) in the predicted values with high confidence $1-\delta$.

It is well known that any off-diagonal element of a density matrix can be estimated using only four rank-1 projective measurements \cite{Caves2002}. The following lemma provides the explicit reconstruction formula:

\begin{lemma}
\label{lemma:off_diagonal_reconstruction}
Let \(\rho = \sum_{j,k} \rho_{jk} |j\rangle\langle k|\) be a \(d \times d\) density matrix.  
For any \(j \neq k\), the off-diagonal element \(\rho_{jk}\) can be reconstructed from the expectation values of four rank-1 observables: \(|j\rangle\langle j|\), \(|k\rangle\langle k|\), \(|\phi_{jk}\rangle\langle \phi_{jk}|\), and \(|\psi_{jk}\rangle\langle \psi_{jk}|\), where:
\begin{equation}
   |\phi_{jk}\rangle = \frac{|j\rangle + |k\rangle}{\sqrt{2}}, \quad
   |\psi_{jk}\rangle = \frac{|j\rangle + i|k\rangle}{\sqrt{2}}.
\end{equation}
\end{lemma}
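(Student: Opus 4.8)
The plan is to prove the lemma by a direct expansion of each of the four expectation values in terms of the entries of the $2\times 2$ principal submatrix of $\rho$ on $\mathrm{span}\{|j\rangle,|k\rangle\}$, and then solve the resulting linear relations for $\rho_{jk}$. Because all four observables are supported on this two-dimensional subspace, only $\rho_{jj},\rho_{kk},\rho_{jk},\rho_{kj}$ enter, and Hermiticity of $\rho$ (so that $\rho_{kj}=\rho_{jk}^{*}$) leaves just the two diagonal entries together with $\mathrm{Re}\,\rho_{jk}$ and $\mathrm{Im}\,\rho_{jk}$ as unknowns, which the four measured numbers clearly overdetermine.

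Concretely, I would first record the trivial identities $\langle j|\rho|j\rangle=\rho_{jj}$ and $\langle k|\rho|k\rangle=\rho_{kk}$, supplying the diagonal entries. Next I would compute $\langle\phi_{jk}|\rho|\phi_{jk}\rangle=\tfrac12(\rho_{jj}+\rho_{kk}+\rho_{jk}+\rho_{kj})=\tfrac12(\rho_{jj}+\rho_{kk})+\mathrm{Re}\,\rho_{jk}$, which isolates the real part. Then I would expand $\langle\psi_{jk}|\rho|\psi_{jk}\rangle$ using $\langle\psi_{jk}|=(\langle j|-i\langle k|)/\sqrt2$, obtaining $\tfrac12(\rho_{jj}+\rho_{kk}+i\rho_{jk}-i\rho_{kj})=\tfrac12(\rho_{jj}+\rho_{kk})-\mathrm{Im}\,\rho_{jk}$, where the only step demanding care is the sign bookkeeping $i(\rho_{jk}-\rho_{kj})=i\cdot 2i\,\mathrm{Im}\,\rho_{jk}=-2\,\mathrm{Im}\,\rho_{jk}$, which uses $\rho_{kj}=\rho_{jk}^{*}$. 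Combining the real and imaginary parts yields the closed form
\[
\rho_{jk}=\langle\phi_{jk}|\rho|\phi_{jk}\rangle-i\,\langle\psi_{jk}|\rho|\psi_{jk}\rangle-\frac{1-i}{2}\bigl(\langle j|\rho|j\rangle+\langle k|\rho|k\rangle\bigr),
\]
exhibiting $\rho_{jk}$ as a fixed linear combination of the four stated expectation values, which proves the claim.

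I do not expect a genuine obstacle: the entire content is the two short expansions of $\langle\phi_{jk}|\rho|\phi_{jk}\rangle$ and $\langle\psi_{jk}|\rho|\psi_{jk}\rangle$ together with the Hermiticity relation. The one thing to be careful about is the phase convention for $|\psi_{jk}\rangle$ (here $|j\rangle+i|k\rangle$), since the opposite choice $|j\rangle-i|k\rangle$ flips the sign multiplying $\mathrm{Im}\,\rho_{jk}$ and hence the coefficient of $\langle\psi_{jk}|\rho|\psi_{jk}\rangle$ in the reconstruction formula; I would state the convention explicitly so the downstream shadow-estimator inherits the correct linear coefficients.
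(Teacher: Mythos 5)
Your proposal is correct and follows essentially the same route as the paper: expand \(\langle\phi_{jk}|\rho|\phi_{jk}\rangle\) and \(\langle\psi_{jk}|\rho|\psi_{jk}\rangle\) to isolate \(\mathrm{Re}\,\rho_{jk}\) and \(\mathrm{Im}\,\rho_{jk}\) via Hermiticity, then recombine; your closed-form expression matches the one the paper later uses in Theorem~2. The only difference is that you write out the sign bookkeeping for the \(|\psi_{jk}\rangle\) term explicitly, which the paper leaves implicit.
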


\begin{proof}
Since \(\rho_{jk} = \rho_{kj}^*\), we can reconstruct its real and imaginary parts separately using projective measurements. The expectation values yield:
\begin{equation}\label{equ:r+i}
    \begin{aligned}
    \mathrm{Re}(\rho_{jk}) &= \mathrm{tr}(\rho\, |\phi_{jk}\rangle\langle \phi_{jk}|) - \frac{1}{2}(\rho_{jj} + \rho_{kk}), \\
    \mathrm{Im}(\rho_{jk}) &= -\mathrm{tr}(\rho\, |\psi_{jk}\rangle\langle \psi_{jk}|) + \frac{1}{2}(\rho_{jj} + \rho_{kk}).
\end{aligned}
\end{equation}

Thus, the full complex matrix element is recovered by:
\begin{equation}\label{equ:total}
    \rho_{jk} = \mathrm{Re}(\rho_{jk}) + i\, \mathrm{Im}(\rho_{jk}).
\end{equation}
\end{proof}

 We define a set of rank-1  observables \(\mathcal{O}\), consisting of \(d(d-1)\) rank-1 projection operators, as follows:  
\begin{equation}\label{eq:set-observable}
   \mathcal{O} = \left\{ |\phi_{jk}\rangle \langle \phi_{jk}|, |\psi_{jk}\rangle \langle \psi_{jk}| \;\middle|\; 0 \leq j < k \leq d-1 \right\}, 
\end{equation}

\subsection{Estimating Off-diagonal Elements via Classical Shadow Tomography}
\begin{lemma}  
\label{lemma:shadow_sampling}  
To estimate the expectation values of \(K\) rank-1 observables in the set \(\mathcal{O}\) (Eq.~(\ref{eq:set-observable})), such that each estimated value \(\hat{o}_i\) satisfies:  
\[ 
|\hat{o}_i - \mathrm{tr}(\rho O_i)| \leq \epsilon \quad \text{for all } i \in \{1, \dots, K\}, 
\]  
with a success probability of at least \(1 - \delta\), the worst-case sample complexity within the classical shadow tomography framework is as follows:   

\begin{itemize}
    \item \textbf{Random Clifford Measurements:}  
    \begin{equation}\label{eq:sample-clifford}
        N_{\mathrm{Clifford}} = O\left( \frac{\log (K / \delta)}{\epsilon^2} \right).
    \end{equation}
\item \textbf{Biased MUB Measurements:}  
We define a biased sampling distribution \(\mathcal{D}\) over the complete set of mutually unbiased bases (MUBs) \(\{ \mathcal{B}_0, \mathcal{B}_1, \dots, \mathcal{B}_{2^n} \}\), where \(\mathcal{B}_0\) denotes the computational basis. The distribution is defined as:
\begin{equation}\label{eq:biasedprobability}
\Pr_{\mathcal{D}}[\mathcal{B}_j] =
\begin{cases}
\displaystyle \frac{1}{2}, & \text{if } j = 0 \quad \text{(computational basis)}, \\\\[6pt]
\displaystyle \frac{1}{2^{n+1}}, & \text{if } 1 \le j \le 2^n \quad \text{(nontrivial MUBs)}.
\end{cases}
\end{equation}

This biased sampling slightly  reduces the total number of measurements required. In particular, the required number of samples for the worst case  satisfies:
\begin{equation}
    N_{\mathrm{MUB}} \approx \frac{1}{2} N_{\mathrm{Clifford}}.
\end{equation}

\end{itemize}
\end{lemma}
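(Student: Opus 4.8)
The plan is to bound $\max_i\|O_i\|_{\mathrm{shadow}}^2$ for each ensemble and feed it into the general shadow bound Eq.~(\ref{eq:sample-general}); the crucial structural fact is that every $O_i\in\mathcal{O}$ (Eq.~(\ref{eq:set-observable})) is a rank-one projector, so $\mathrm{tr}(O_i^2)=1$. For random global Cliffords the measurement channel is the depolarizing channel $\mathcal{M}(X)=(X+\mathrm{tr}(X)I)/(d+1)$, hence $\mathcal{M}^{-1}(X)=(d+1)X-\mathrm{tr}(X)I$, and since the Clifford group is a unitary $3$-design the standard estimate $\|O\|_{\mathrm{shadow}}^2\le 3\,\mathrm{tr}(O^2)$ applies. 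Thus $\max_i\|O_i\|_{\mathrm{shadow}}^2\le 3$, and the median-of-means estimator over $N=O(\log(K/\delta)/\epsilon^2)$ Clifford snapshots achieves the stated guarantee, i.e.\ $N_{\mathrm{Clifford}}=O(\log(K/\delta)/\epsilon^2)$. A direct third-moment computation sharpens this: the worst-case state for $O_i$ is $\sigma=O_i$ itself, giving the exact value $\|O_i\|_{\mathrm{shadow}}^2=3d/(d+2)$.

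For the biased MUB ensemble I would first pin down the measurement channel. Writing $\mathcal{M}_a$ for the dephasing channel in $\mathcal{B}_a$ and using that the complete set $\{\mathcal{B}_0,\dots,\mathcal{B}_d\}$ is a complex-projective $2$-design — equivalently $\sum_{a=0}^d\mathcal{M}_a(X)=X+\mathrm{tr}(X)I$ — the sampling weights of Eq.~(\ref{eq:biasedprobability}) give $\mathcal{M}(X)=(\tfrac12-\tfrac1{2d})\mathcal{M}_0(X)+\tfrac1{2d}X+\tfrac{\mathrm{tr}(X)}{2d}I$. This acts as a scalar on the three invariant blocks: it fixes $I$, multiplies traceless diagonal matrices by $\tfrac12$, and multiplies off-diagonal matrices by $\tfrac1{2d}$. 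Inverting blockwise and applying $\mathcal{M}^{-1}$ to $O=|\phi_{jk}\rangle\langle\phi_{jk}|$ (decomposed as $\tfrac1dI$ plus a traceless-diagonal part plus $\tfrac12(|j\rangle\langle k|+|k\rangle\langle j|)$) yields $\mathcal{M}^{-1}(O)=-\tfrac1dI+(|j\rangle\langle j|+|k\rangle\langle k|)+d(|j\rangle\langle k|+|k\rangle\langle j|)$, and the analogous expression for $|\psi_{jk}\rangle\langle\psi_{jk}|$.

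Next I would compute the shadow norm $\|O\|_{\mathrm{shadow}}^2=\max_\sigma\bigl[\tfrac12\sum_{j'}\langle j'|\sigma|j'\rangle\,\langle j'|\mathcal{M}^{-1}(O)|j'\rangle^2+\tfrac1{2d}\sum_{a=1}^d\sum_{k'}\langle a,k'|\sigma|a,k'\rangle\,\langle a,k'|\mathcal{M}^{-1}(O)|a,k'\rangle^2\bigr]$. The single-shot values are $\langle j'|\mathcal{M}^{-1}(O)|j'\rangle=\mathbf{1}[j'\in\{j,k\}]-\tfrac1d$ in the computational basis and, using that $\mathcal{B}_a$ ($a\ge1$) is unbiased to the computational basis so $|\langle j|a,k'\rangle|=|\langle k|a,k'\rangle|=1/\sqrt d$, $\langle a,k'|\mathcal{M}^{-1}(O)|a,k'\rangle=\tfrac1d+2\cos\gamma_{a,k'}$, where $\gamma_{a,k'}$ is the relative phase between $\langle j|a,k'\rangle$ and $\langle k|a,k'\rangle$. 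For the standard Galois-field MUB family these phases sweep the $d$-th (resp.\ $2^n$-th) roots of unity uniformly as $k'$ runs over a fixed nontrivial basis, so the MUB sums become trigonometric sums whose odd-order terms vanish; a short calculation then shows that for every $\sigma$ supported on $\mathrm{span}\{|j\rangle,|k\rangle\}$ the bracket equals $\tfrac12+1+O(1/d)$, i.e.\ about half of the Clifford value $3d/(d+2)$. To finish I would argue this is essentially the maximum: a $\sigma$ with weight outside $\mathrm{span}\{|j\rangle,|k\rangle\}$ strictly reduces the computational-basis term, and by mutual unbiasedness it cannot place large weight on the few outcomes with $\cos\gamma_{a,k'}\approx\pm1$ in more than a vanishing fraction of the $d$ nontrivial bases, so its MUB term stays close to $1$. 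Hence $\max_i\|O_i\|_{\mathrm{shadow}}^2\approx\tfrac32\approx\tfrac12\cdot\tfrac{3d}{d+2}$, and Eq.~(\ref{eq:sample-general}) gives $N_{\mathrm{MUB}}\approx\tfrac12 N_{\mathrm{Clifford}}$.

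The hard part is this last step — upgrading the evaluation at the candidate worst-case state to a genuine bound on $\max_\sigma$. The obstruction is the $O(d)$ amplification of the off-diagonal component in $\mathcal{M}^{-1}$: the crude estimate $(\tfrac1d+2\cos\gamma)^2\le(\tfrac1d+2)^2$ only gives $\|O_i\|_{\mathrm{shadow}}^2\le 5/2$, which already proves $N_{\mathrm{MUB}}=O(\log(K/\delta)/\epsilon^2)$ but not the factor $1/2$. Nailing the constant $3/2$ requires exploiting mutual unbiasedness quantitatively — every fixed state overlaps all but a vanishing fraction of MUB vectors with weight $O(1/d)$ — which I would make precise via the second-moment identity $\sum_{a=1}^d\mathcal{M}_a=\mathrm{id}+\mathrm{tr}(\cdot)I-\mathcal{M}_0$ for the nontrivial sub-ensemble (note this sub-ensemble is not itself a $2$-design, but it still controls the second moment we need). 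One also has to run the phase bookkeeping in the Galois-field construction appropriate to $d=2^n$ rather than the $\mathbb{Z}_d$ construction, though the required phase-uniformity carries over verbatim.
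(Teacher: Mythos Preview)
Your Clifford argument coincides with the paper's: bound $\|O_i\|_{\mathrm{shadow}}^2\le 3\,\mathrm{tr}\bigl[(O_i-I/d)^2\bigr]=3(1-1/d)$ via the $3$-design property and plug into Eq.~(\ref{eq:sample-general}).

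For the biased-MUB half you correctly isolate the obstruction---the maximization over $\sigma$---but you miss the structural fact that dissolves it, and your stated phase assumption is in fact wrong. For the standard $2^n$-dimensional MUB family the nontrivial-basis amplitudes satisfy $\sqrt{2^n}\,\langle\ell|U_a^\dagger|k'\rangle\in\{\pm 1,\pm i\}$ for every $\ell\neq 0$ (Result~6 of \cite{yu2023efficient}); they do \emph{not} sweep the $2^n$-th roots of unity. Hence your relative phase $\gamma_{a,k'}$ is always a fourth root of unity, so $\cos\gamma_{a,k'}\in\{0,\pm 1\}$ and $\mathrm{tr}(O_0 P_{a,k'})\in\{-1/d,\,0,\,+1/d\}$ exactly. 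The paper uses a further consequence of the same result: for fixed nontrivial basis $a$ and fixed $\ell$, the amplitudes (as $k'$ varies) lie either all in $\{\pm 1\}$ or all in $\{\pm i\}$. This partitions the $d$ nontrivial bases into two equal halves $\mathcal{U}_{O,1},\mathcal{U}_{O,2}$: in one the relative phase matches that of the target $|\phi\rangle\in\mathcal{O}$ and $\mathrm{tr}^2(O_0 P_{a,k'})=1/d^2$ for every $k'$; in the other it is orthogonal and $\mathrm{tr}(O_0 P_{a,k'})=0$ identically. The nontrivial-MUB variance term is then $2d\cdot\tfrac{d}{2}\cdot d^{-2}=1$ for \emph{every} state $\sigma$---no optimization is needed at all. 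Adding the computational-basis term $\le 2(1/2-1/d)^2<1/2$ gives $\|O_0\|_{\mathrm{shadow}}^2<3/2$ directly, whence $N_{\mathrm{MUB}}\approx\tfrac12 N_{\mathrm{Clifford}}$.

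Your candidate-$\sigma$-plus-optimality route is therefore fighting the problem. The discrete fourth-root structure makes the pointwise values of $\mathrm{tr}^2(O_0 P_{a,k'})$ so rigid that the $\sigma$-weighted sum is $\sigma$-independent on the nose; this is why the second-moment manoeuvre you sketch is unnecessary, and why your crude $5/2$ bound---while enough for the big-$O$ statement---cannot be pushed to the constant $3/2$ along the line you propose.
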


\begin{proof}
For observables $O_i$ in \(\mathcal{O}\), classical shadow tomography based on random Clifford measurements \cite{huang2020predicting} provides a reliable framework with a reconstruction channel:
\[
\mathcal{M}^{-1}(X) = (2^n + 1)X - \mathrm{tr}(X)I,
\]
with shadow norm bounded by:
\[
\max_i \|O_i\|_{\text{shadow}}^2 \le \max_i 3 \cdot \mathrm{tr}([O_i - I/2^n]^2) = 3(1 - 1/2^n).
\]
This leads to the sample complexity in Eq.~(\ref{eq:sample-clifford}).

 Since each observable in \(\mathcal{O}\) is a rank-1 projector supported on only two computational basis vectors, it contains exactly \(t = 2\) nonzero entries and is therefore MUB-sparse in the sense of Theorem~2 in~\cite{wang2024classical}.

We follow the biased sampling strategy described in Eq.~(\ref{eq:biasedprobability}) for MUB-sparse observables. 
 The reconstruction channel is:
\[
\mathcal{M}_b^{-1}(X) = 2^{n+1} \left[ X - \frac{2^n - 1}{2^n} \sum_{k=0}^{2^n - 1} \mathrm{tr}(X P_{0k}) P_{0k} \right] - \frac{\mathrm{tr}(X) I}{2^n},
\]
where \(P_{0k} = |k\rangle\langle k|\) denotes the computational basis projectors.

Although Theorem~2 in \cite{wang2024classical} provides a general bound on the shadow norm as:
\[
\max_i \|O_i\|_{\text{shadow}}^2 \le t^2 = 4,
\]
this bound is based on the worst-case setting for general \( t \)-sparse observables. However, in our case, each observable in \( \mathcal{O} \) is supported only on two computational basis states and all nonzero amplitude coefficients are of equal magnitude. This specific structure allows us to derive a tighter shadow norm bound.

For each \(O \in \mathcal{O}\), define the traceless part as \( O_0 = O - \frac{\mathrm{tr}(O)}{2^n} I \). The variance under biased MUB measurement is given by Eq.~(B12) in \cite{wang2024classical}:
\begin{equation}
\begin{aligned}
\|O_0\|_{\sigma,b}^2 =
&\ 2^{n+1} \sum_{j=1}^{2^n} \sum_{k=0}^{2^n - 1} \operatorname{tr}^2(O_0 P_{jk}) \operatorname{tr}(\sigma P_{jk}) \\
&\quad + 2 \sum_{k=0}^{2^n - 1} \operatorname{tr}^2(O_0 P_{0k}) \operatorname{tr}(\sigma P_{0k}),
\end{aligned}
\end{equation}
where \( P_{jk} = U_j^\dagger |k\rangle\langle k| U_j \) for nontrivial MUB unitaries \( j = 1, \dots, 2^n - 1 \), and \( P_{0k} = |k\rangle\langle k| \) for the computational basis.

Each observable \( O=|\phi\rangle\langle\phi| \in \mathcal{O} \) is a rank-1 projector supported on the subspace spanned by two computational basis vectors \( |\ell_1\rangle, |\ell_2\rangle \), and takes the form:
\[
|\phi\rangle \in \left\{
\frac{1}{\sqrt{2}}(|\ell_1\rangle \pm |\ell_2\rangle),\quad
\frac{1}{\sqrt{2}}(|\ell_1\rangle \pm i |\ell_2\rangle)
\right\}.
\]

To analyze the variance for the computational basis, we observe that for each \( k \), \( \operatorname{tr}(O|k\rangle\langle k|) \le 1/2 \), hence:
\[
\operatorname{tr}(O_0 |k\rangle\langle k|) = \operatorname{tr}(O|k\rangle\langle k|) - 1/2^n \le 1/2 - 1/2^n,
\]
so:
\[
\operatorname{tr}^2(O_0 |k\rangle\langle k|) \le (1/2 - 1/2^n)^2.
\]
Summing over all \( k \) and using \( \sum_k \operatorname{tr}(\sigma P_{0k}) = 1 \), the total contribution is:
\[
2 \sum_k \operatorname{tr}^2(O_0 P_{0k}) \operatorname{tr}(\sigma P_{0k}) \le \frac{1}{2} + O(1/2^n).
\]

By Result~6(i) of~\cite{yu2023efficient}, for any fixed nontrivial MUB unitary \( U_j^{\dagger} \), the rescaled amplitudes \( \sqrt{2^n}\langle \ell | U_j^\dagger |k\rangle \), for fixed \( \ell \ne 0 \) and varying \( k \), are either all in \( \{ \pm 1 \} \) or all in \( \{ \pm i \} \), and are evenly distributed: half of the values are \( +1 \) (or \( +i \)) and the other half are \( -1 \) (or \( -i \)).

Moreover, if we fix \( \ell \ne 0 \) and vary both the MUB index \( j = 1, \dots, 2^n \) and the basis index \( k = 0, \dots, 2^n - 1 \), then the complete set of values
\(
\left\{ \sqrt{2^n} \langle \ell | U_j^\dagger | k \rangle \right\}
\)
consists of \(4^n\) complex numbers. According to Result~6(ii) in~\cite{yu2023efficient}, the values \( \{ \pm 1, \pm i \} \) appear with equal frequency, each occurring in exactly one-quarter of the total entries.

We now compute the quantity \( \operatorname{tr}^2(O_0 P_{jk}) \), where the traceless observable is defined as
\[
O_0 = |\phi\rangle\langle \phi| - \frac{I}{2^n}.
\]

 By examining the relative phase between the \( \ell_1 \) and \( \ell_2 \) components of the MUB state \( U_j^\dagger |k\rangle \), the nontrivial unitaries \( \{ U_j \}_{j=1}^{2^n} \) can be partitioned into two disjoint sets based on their interference behavior with the state \( |\phi\rangle \in \mathcal{O} \):

\begin{itemize}
  \item \textbf{\( \mathcal{U}_{O,1} \)}: The relative phase between the \( \ell_1 \)-th and \( \ell_2 \)-th components of \( U_j^\dagger |k\rangle \) matches that of the target state \( |\phi\rangle \), for all \( k \). As a result, the contributions to \( \operatorname{tr}(O P_{jk}) \) exhibit either constructive or destructive interference:
\[
\operatorname{tr}(O P_{jk}) \in \left\{ 0, \frac{2}{2^n} \right\} \quad \Rightarrow \quad \operatorname{tr}(O_0 P_{jk}) = \pm \frac{1}{2^n}.
\]
For example, consider \( |\phi\rangle = \frac{1}{\sqrt{2}}(|\ell_1\rangle + i |\ell_2\rangle) \). Among the projectors \( P_{jk} \), there are \( 2^{n-1} \) cases where the projection of \( U_j^\dagger |k\rangle \) onto the subspace spanned by \( \{|\ell_1\rangle, |\ell_2\rangle\} \) satisfies:
\[
(|\ell_1\rangle\langle \ell_1| + |\ell_2\rangle\langle \ell_2|) U_j^\dagger |k\rangle = \pm\frac{1}{\sqrt{2^n}} (|\ell_1\rangle + i |\ell_2\rangle),
\]
which yields constructive interference and gives \(\operatorname{tr}(O P_{jk}) = \frac{2}{2^n}\). For the remaining \( 2^{n-1} \) projectors, the projection becomes
\[
(|\ell_1\rangle\langle \ell_1| + |\ell_2\rangle\langle \ell_2|) U_j^\dagger |k\rangle =\pm \frac{1}{\sqrt{2^n}} (|\ell_1\rangle - i |\ell_2\rangle),
\]
leading to destructive interference and \(\operatorname{tr}(O P_{jk}) = 0\).  
  Therefore,
  \[
  \operatorname{tr}^2(O_0 P_{jk}) = \frac{1}{4^n}.
  \]

  \item \textbf{\( \mathcal{U}_{O,2} \)}: The relative phase differs from that of \( |\phi\rangle \), leading to cancellation of cross terms and uniform projection:
  \[
  \operatorname{tr}(O P_{jk}) = \frac{1}{2^n}, \quad \Rightarrow \quad \operatorname{tr}(O_0 P_{jk}) = 0.
  \]
Continuing the above example with \( |\phi\rangle = \frac{1}{\sqrt{2}}(|\ell_1\rangle + i |\ell_2\rangle) \), we observe that for all MUB states \( U_j^\dagger |k\rangle \) in this class, their projection onto the subspace spanned by \( \{|\ell_1\rangle, |\ell_2\rangle\} \) takes the form:
\[
(|\ell_1\rangle\langle \ell_1| + |\ell_2\rangle\langle \ell_2|) U_j^\dagger |k\rangle = \pm \frac{1}{\sqrt{2^n}} (|\ell_1\rangle \pm |\ell_2\rangle).
\]
Then $\operatorname{tr}(O P_{jk}) = \frac{1}{2^n}$. 
  
\end{itemize}

Due to the symmetric structure of MUB states expression, the two sets of unitaries \( \mathcal{U}_{O,1} \) and \( \mathcal{U}_{O,2} \) each contain exactly half of the nontrivial MUB unitaries, i.e., \( 2^{n-1} \) elements respectively.  
 Therefore, the total variance contribution from nontrivial MUBs is:
\[
2^{n+1} \cdot \frac{2^n }{2} \cdot \frac{1}{4^n} < 1.
\]

Combining both computational basis and notrivial MUB parts:
\[
\|O_0\|^2_{\text{shadow}} < \frac{1}{2} + 1 = \frac{3}{2}.
\]

Since the shadow norm under biased MUB measurements is roughly half that of random Clifford measurements, we deduce:
\[
N_{\mathrm{MUB}} \approx \frac{1}{2} N_{\mathrm{Clifford}}.
\]
\end{proof}

\begin{theorem}
To estimate \(K\) distinct off-diagonal density matrix elements \(\rho_{j_1k_1}, \dots, \rho_{j_Kk_K}\) with additive error at most \(\epsilon\), i.e., satisfying
\[
|\tilde{\rho}_{j_tk_t} - \rho_{j_tk_t}| \leq \epsilon \quad \text{for all } t = 1, \dots, K,
\]
the required sample complexity under the classical shadow tomography framework is as follows:

\begin{enumerate}
    \item \emph{Diagonal Elements:}  
    Perform projective measurements in the computational basis to estimate up to \(\min(2K, d) = \mathcal{O}(K)\) diagonal entries. The sample complexity for this step is
    \[
    O\left(\frac{\log K}{\epsilon^2}\right).
    \]
    
    \item \emph{Off-Diagonal Elements:}  
    Perform DMP-ST with random Clifford measurements or biased MUB measurements to estimate the rank-1 observables
    \[
    \left\{ |\phi_{j_tk_t}\rangle\langle \phi_{j_tk_t}|,\; |\psi_{j_tk_t}\rangle\langle \psi_{j_tk_t}| \;\middle|\; t = 1, \dots, K \right\}.
    \]
    The sample complexity for this step is also
    \[
    O\left(\frac{\log K}{\epsilon^2}\right).
    \]
\end{enumerate}
\end{theorem}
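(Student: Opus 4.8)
The plan is to reduce the estimation of each complex off-diagonal entry $\rho_{j_t k_t}$ to the estimation of a constant number of rank-1 expectation values via Lemma~\ref{lemma:off_diagonal_reconstruction}, and then to invoke the two sample-complexity results already in hand: part~(2) of Theorem~\ref{lemma:diagonal_sample_complexity} for the computational-basis (diagonal) contribution and Lemma~\ref{lemma:shadow_sampling} for the shadow (off-diagonal) contribution. The only genuinely new work is a short error-propagation argument through Eqs.~\eqref{equ:r+i} and~\eqref{equ:total}, together with a single union bound over all quantities being estimated.

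First I would fix a per-quantity target accuracy $\epsilon_1 = c\,\epsilon$ with a small universal constant $c$ to be pinned down at the end, and split the failure budget as $\delta/2$ for each of the two steps. \emph{Diagonal step:} the $K$ targets involve at most $\min(2K,d) = \mathcal{O}(K)$ distinct diagonal elements; applying part~(2) of Theorem~\ref{lemma:diagonal_sample_complexity} with accuracy $\epsilon_1$ and failure probability $\delta/2$ gives $N_{\mathrm{diag}} = \mathcal{O}(\log(K/\delta)/\epsilon^2)$ and, on a good event of probability $\ge 1-\delta/2$, estimates $\hat\rho_{j_t j_t}, \hat\rho_{k_t k_t}$ each within $\epsilon_1$. \emph{Off-diagonal step:} the $2K$ projectors $|\phi_{j_t k_t}\rangle\langle\phi_{j_t k_t}|$ and $|\psi_{j_t k_t}\rangle\langle\psi_{j_t k_t}|$ all lie in the set $\mathcal{O}$ of Eq.~\eqref{eq:set-observable}, so Lemma~\ref{lemma:shadow_sampling} applied with $2K$ targets, accuracy $\epsilon_1$, and failure probability $\delta/2$ produces a classical shadow of size $N_{\mathrm{off}} = \mathcal{O}(\log(K/\delta)/\epsilon^2)$—using random Clifford or the biased-MUB ensemble, the latter smaller by the factor $\approx 1/2$—from which all $2K$ expectation values are recovered within $\epsilon_1$ with probability $\ge 1-\delta/2$. (Since the computational basis carries weight $1/2$ in the biased-MUB distribution, the diagonal estimates could in fact be read off the same dataset; I would mention this but keep the accounting separate for clarity.)

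Next I would combine the two guarantees on the joint good event (probability $\ge 1-\delta$ by the union bound). Substituting the estimates into Eq.~\eqref{equ:r+i}, where $\mathrm{Re}(\rho_{j_t k_t})$ is a signed combination of one shadow expectation value and two diagonal entries with coefficients $1, -\tfrac12, -\tfrac12$, the triangle inequality yields $|\widetilde{\mathrm{Re}}(\rho_{j_t k_t}) - \mathrm{Re}(\rho_{j_t k_t})| \le 2\epsilon_1$ and likewise $|\widetilde{\mathrm{Im}}(\rho_{j_t k_t}) - \mathrm{Im}(\rho_{j_t k_t})| \le 2\epsilon_1$; hence $|\tilde\rho_{j_t k_t} - \rho_{j_t k_t}| \le 2\sqrt{2}\,\epsilon_1$ for every $t$ by Eq.~\eqref{equ:total}. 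Taking $c = 1/(2\sqrt{2})$ makes the right-hand side $\le \epsilon$ while only rescaling hidden constants, so each of the two steps—and their sum—costs $\mathcal{O}(\log(K/\delta)/\epsilon^2)$, which for constant $\delta$ is the advertised $\mathcal{O}(\log K/\epsilon^2)$.

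I do not expect a serious obstacle: the genuinely hard part—the shadow-norm bounds for the observables in $\mathcal{O}$ under Clifford and biased-MUB measurements—is already established in Lemma~\ref{lemma:shadow_sampling}. The two points needing care are (i) tracking the constant-factor amplification in the error-propagation step so that the final additive error is exactly $\epsilon$ rather than a constant multiple of it, and (ii) making explicit that one shared shadow simultaneously controls all $2K$ off-diagonal observables, which is precisely the global-estimation / union-bound mechanism highlighted in the remark following Theorem~\ref{lemma:diagonal_sample_complexity}.
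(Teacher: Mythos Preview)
Your proposal is correct and follows essentially the same route as the paper: reduce each $\rho_{j_tk_t}$ via Lemma~\ref{lemma:off_diagonal_reconstruction} to a constant number of expectation values, then invoke Theorem~\ref{lemma:diagonal_sample_complexity}(2) for the $\mathcal{O}(K)$ diagonal terms and Lemma~\ref{lemma:shadow_sampling} for the $2K$ rank-1 projectors, with a short error-propagation to pin down the per-quantity accuracy. The paper's proof is terser (it takes per-component accuracy $\epsilon/4$ and skips the explicit $\delta/2$ budgeting), but your more careful constant $c=1/(2\sqrt{2})$ and explicit union-bound accounting are entirely in the same spirit.
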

\begin{proof}
By Eqs.~(\ref{equ:r+i}) and~(\ref{equ:total}), each off-diagonal element \(\rho_{jk}\) can be recovered from a linear combination of three expectation values:
\[
\rho_{jk} = \rho^{*}_{kj}=\mbox{tr}(\rho(|\phi_{jk}^{+}\rangle\langle\phi_{jk}^{+}|-i|\psi_{jk}^{+}\rangle\langle\psi_{jk}^{+}|))-\frac{1-i}{2}(\rho_{kk}+\rho_{jj}). 
\]
To ensure the total estimation error is within \(\epsilon\), it suffices to estimate each component with error at most \(\epsilon/4\).

Since classical shadow tomography provides an additive-error estimator with sample complexity scaling as \(\mathcal{O}(\log M / \epsilon^2)\) for estimating \(M\) observables, applying it to the \(\mathcal{O}(K)\) diagonal terms and \(2K\) rank-1 observables yields an overall sample complexity of
\(
O\left(\frac{\log K}{\epsilon^2}\right)
\)
for each part. Combining both steps completes the proof.
\end{proof}

\subsection{Comparing DMP and DMP-ST: Measurement Configuration and Sample Complexity}

In conventional direct measurement protocols (DMPs) \cite{lundeen2011direct,lundeen2012procedure,salvail2013full,bamber2014observing,shi2015scan,thekkadath2016direct,bolduc2016direct,vallone2016strong,calderaro2018direct,pan2019direct,zhang2020direct,xu2024resource}, estimating each off-diagonal matrix element \(\rho_{jk}\) typically requires a tailored measurement configuration—such as a specific ancilla interaction or interference setup. Consequently, estimating \(K\) matrix elements generally entails \(\mathcal{O}(K)=mK\) distinct device settings, with the proportionality constant $m$ depending on the specific experimental implementation. 
When \(K\) is small (e.g., \(K=1\)), this requirement is modest, and it scales linearly with the number of target entries in the worst case.

By contrast, DMP-ST significantly improves the sample complexity, requiring only \(\mathcal{O}(\log(K/\delta)/\epsilon^2)\) samples of $\rho$ to estimate all \(K\) target elements. This is achieved by reusing a common set of randomized measurements for multiple estimations, rather than dedicating individual configurations to each observable. 

However, this sample efficiency comes with a trade-off in terms of measurement configurations. In the worst case, DMP-ST requires \(\mathcal{O}(\log(K/\delta)/\epsilon^2)\) distinct random unitary settings—each corresponding to a different measurement basis. This scaling arises from the need to gather sufficiently diverse measurement outcomes to ensure accurate reconstruction.

While such overhead may appear excessive, especially for small \(K\), it can be slightly mitigated through biased MUB measurements. Compared to uniformly random Clifford sampling for classical shadow tomography, biased MUB strategies not only offer a modest improvement in sample complexity—roughly by a factor of two by Lemma 2—but also reduce experimental burden in two key ways.

First, since the computational basis is selected with probability \(1/2\), the number of distinct device configurations is effectively halved, resulting in less frequent switching of measurement settings. Second, the circuit structure of each nontrivial MUB basis is simpler: it can be implemented using only $H$–$S$–$CZ$ gates \cite{yu2023efficient}, as opposed to the more complex 7-stage decomposition—$CX$–$CZ$–$S$–$H$–$S$–$CZ$–$CX$—typically required for Clifford-based measurements \cite{aaronson2004improved,koenig2014efficiently,maslov2018shorter,van2021simple,bravyi2021hadamard}.

These advantages make biased MUB sampling a more practical choice for DMP-ST in scenarios where measurement efficiency and experimental feasibility are critical considerations.

Briefly, when the number of target matrix elements \(K\) is small, $\mathcal{O}(\log(K/\delta)/\epsilon^2)\gg mK$, DMP-ST (by Clifford or biased MUB) may incur a higher measurement configuration cost than conventional DMPs. This is particularly evident under relaxed accuracy \(\epsilon\) or confidence \(\delta\) requirements, where traditional DMPs can achieve the same estimation with a fixed and at most $mK$ device settings.

However, this situation reverses for large \(K\). As the number of target entries increases, the logarithmic dependence of DMP-ST yields strict advantages in both sample and configuration complexity. For example, in full density matrix estimation where \(K = \mathcal{O}(d^2)\), traditional DMPs require at least \(\mathcal{O}(d)\) configurations, whereas DMP-ST achieves the same task using only \(\mathcal{O}(\log(d/\delta)/\epsilon^2)\) random measurement settings—an exponential reduction.

These exponential improvements in both sample complexity and measurement configuration requirements may seem surprisingly strong. However, it does not imply that \(\mathcal{O}(\log d/\epsilon^2)\) random measurement configurations are sufficient to perform full quantum state tomography for arbitrary \(d\)-dimensional quantum states. Precisely, this exponential gain applies to entrywise estimation, not to full quantum state tomography in trace distance. Accurately reconstructing the entire density matrix still requires additional considerations, which we address in the following section.

\section{Entrywise Estimation and Trace-Distance-Guaranteed Quantum State Tomography}

\subsection{Introduction and Trace Distance Motivation for Quantum State Tomography}

In this section, we demonstrate how classical shadow tomography can be adapted to reconstruct the full density matrix with rigorous guarantees in trace distance. We analyze both the sample complexity and classical computational complexity required to achieve this level of accuracy.

Understanding the sample complexity of quantum state tomography (QST) has long been a central question in quantum information theory, with applications in entanglement verification, quantum property testing, and quantum device calibration.

While classical shadow tomography protocols are primarily designed to estimate expectation values of the form \(\tr(\rho O)\) for a given set of observables $O_1, O_2, \dots$, the class of observables that can be efficiently handled is typically restricted. In particular, these protocols often assume that each observable is $k$-local—i.e., a tensor product involving at most $k$ nontrivial operators—or that it satisfies \(\tr(O^2) \le \operatorname{poly}(\log d)\). For more general observables that do not fall into these categories, one could resort to full quantum state tomography techniques to estimate $\tilde{\rho}$.

The goal of QST is to reconstruct the entire quantum state $\tilde{\rho}$ from measurement result such that the trace distance with target $\rho$ satisfies
\[
D_{\mathrm{tr}}(\rho, \tilde{\rho}) = \frac{1}{2} \| \rho - \tilde{\rho} \|_1 \le \epsilon.
\]
This ensures accurate estimation of \emph{all} observables with bounded operator norm, making trace distance a fundamental metric for QST. 

Some works use the infidelity \( \mathrm{IF}(\rho, \tilde{\rho}) = 1 - F(\rho, \tilde{\rho}) \) to characterize the estimation error between \( \rho \) and \( \tilde{\rho} \), rather than the trace distance. These two quantities are closely related via the inequalities \cite{fuchs1999cryptographic}:
\[
1 - F(\rho, \tilde{\rho}) \leq D_{\mathrm{tr}}(\rho, \tilde{\rho})  \leq \sqrt{1 - F(\rho, \tilde{\rho})^2},
\]
which ensures that bounds on the infidelity imply corresponding bounds on the trace distance, and vice versa.

\begin{remark}
To understand why trace distance provides a strong operational guarantee, consider the estimation error for any observable \(O\). Let \(\Delta = \tilde{\rho} - \rho\). Then,
\[
\left| \tr(\tilde{\rho} O) - \tr(\rho O) \right| = \left| \tr(\Delta O) \right| \le \|\Delta\|_1 \cdot \|O\|_\infty.
\]
Here, \(\|\Delta\|_1 = 2D_{\mathrm{tr}}(\rho, \tilde{\rho})\) quantifies distinguishability, and $\|O\|_\infty$ is the largest absolute eigenvalue of \(O\). 
Thus, small trace distance implies uniformly small errors for all bounded observables. 
For example, if $O$ is an arbitrary $n$-qubit Pauli observable, $\|O\|_\infty=1$ but $\tr(O^2)=2^n$.  
\end{remark}

\subsection{ Early Approaches and Sample Complexity Bounds}

A standard approach to quantum state tomography (QST) performs full Pauli measurements followed by linear inversion \cite{nielsen2010quantum}:
\[
\tilde{\rho} = \frac{1}{2^n} \sum_{P \in \mathcal{P}_n} \operatorname{tr}(\rho P) \cdot P,
\]
where \( \mathcal{P}_n \) is the set of \( n \)-qubit Pauli operators and \( d = 2^n \). This protocol requires estimating the expectation values of all \( d^2 \) observables, making it impractical for large systems.

To improve efficiency, later works exploit low-rank structure and apply compressed sensing techniques. Assuming that the density matrix \( \rho \) has rank \( r \ll d \), one can reconstruct it by solving a convex optimization problem using only \( \mathcal{O}(r d \log^2 d) \) Pauli expectations \cite{gross2010quantum}. To achieve trace distance error \( \epsilon \), the sample complexity for rank-\( r \) states using Pauli measurements is \( \mathcal{O}(r^2 d^2 \log d / \epsilon^2) \) \cite{flammia2012quantum}.

Further improvements can be obtained by using rank-one measurements drawn from random vectors. \citet{kueng2017low} showed that with Gaussian measurements, one can recover rank-\( r \) states with trace distance error \( \epsilon \) using \( \mathcal{O}(d r^2 / \epsilon^2) \) samples. For structured measurements from an approximate 4-design, the sample complexity becomes \( \mathcal{O}(d r^2 \log d / \epsilon^2) \), reflecting the limited randomness of the design. Notably, while \cite{kueng2017low} does not present results in terms of trace distance, this translation is clarified in the discussion following Theorem 1 of \cite{haah2017tit}.

When entangled (collective) measurements on \( \rho^{\otimes N} \) are allowed, the sample complexity can be further reduced to
\[
N = \mathcal{O}\left( \frac{d^2}{\varepsilon^2} \ln\left( \frac{d}{\varepsilon} \right) \right),
\]
as shown by \citet{o2016efficient} and \citet{haah2016sample}. This outperforms protocols based on single-copy measurements. The Holevo-type lower bound states that at least
\[
\Omega\left( \frac{dr}{\varepsilon^2 \log(d / r\varepsilon)} \right)
\]
copies are necessary to achieve trace distance error \( \varepsilon \) for any protocols, indicating that collective measurement schemes approach the information-theoretic limit. However, such entangled measurements are well beyond the reach of current NISQ-era quantum devices for the large $N$.

Limited to single-copy measurements on $\rho$, even when adaptive strategies are allowed—that is, when each new measurement can depend on previous measurement outcomes—\citet{chen2023does} proves a tight lower bound of $\Omega(d^3/\epsilon^2)$ for quantum state tomography. This matches the upper bound in \citet{kueng2017low}, confirming its optimality in the incoherent measurement setting. The proof constructs a hard distribution over quantum states via Gaussian ensembles and shows that any separable protocol fails to recover the state accurately unless the sample complexity meets this bound. While the result is information-theoretically tight, it is non-constructive and does not provide explicit measurement or reconstruction procedures.

These limitations motivate the development of shadow-based protocols that are both statistically and computationally efficient under the trace distance metric. We now present such a method.

\subsection{Our Approach: DMP-ST for Quantum State Tomography With Trace Distance Error}

 We introduce a tomography protocol based on the DMP-ST framework that achieves the same sample complexity of \(\mathcal{O}(d^3 \log d / \epsilon^2)\) as the approximate 4-design method in \citet{kueng2017low}, but using more practical measurements.

Instead of approximate 4-designs, we employ random Clifford circuits (forming a 3-design on $n$-qubit systems) or biased MUBs (forming a 2-design). These measurement schemes are implementable with \(\mathcal{O}(n^2)\) gates, significantly simplifying physical realization. Moreover, unlike the semidefinite program (SDP) required, our protocol estimates each matrix element directly, avoiding global convex optimization. In the approach of \citet{kueng2017low}, one needs to solve a semi-definite program (SDP) that minimizes the nuclear norm in order to find the optimal \(d\)-dimensional density matrix \(\tilde{\rho}\) consistent with the \(\mathcal{O}(d^2 \log d)\)  expectation values. 
In contrast, the DMP-ST method offers a more direct reconstruction: each matrix element can be estimated individually in a step-by-step fashion.

We formalize the performance guarantee below:

\begin{theorem}
Let \(\rho\) be a \(d \times d\) density matrix. To obtain an estimate \(\tilde{\rho}\) such that the trace distance satisfies:
\[
D_{\mathrm{tr}}(\rho, \tilde{\rho}) \leq \epsilon,
\]
we can use the DMP-ST with random Clifford measurements or biased MUB meaasurements to directly estimate each density matrix element \(\rho_{jk}\), ensuring that the element-wise error satisfies:
\[
|\tilde{\rho}_{jk} - \rho_{jk}| \leq \epsilon_1 \leq \frac{2\epsilon}{d^{3/2}} \quad \text{for all } j, k.
\]

The total sample complexity for achieving this trace distance error $\epsilon$ is:
\[
O\left(\frac{d^3 \log d}{\epsilon^2}\right).
\]
\end{theorem}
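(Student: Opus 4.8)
The plan is to bound the trace distance by a matrix norm that is controlled entrywise, and then invoke the earlier sample-complexity results. The key inequality is the standard bound relating the Schatten-$1$ norm to the Frobenius norm for a $d \times d$ matrix, namely $\|\Delta\|_1 \le \sqrt{d}\,\|\Delta\|_2$, where $\Delta = \tilde\rho - \rho$. Next I would bound the Frobenius norm entrywise: $\|\Delta\|_2^2 = \sum_{j,k} |\tilde\rho_{jk} - \rho_{jk}|^2 \le d^2 \epsilon_1^2$, so that $\|\Delta\|_2 \le d\,\epsilon_1$ and hence $D_{\mathrm{tr}}(\rho,\tilde\rho) = \tfrac12\|\Delta\|_1 \le \tfrac12 \sqrt{d}\cdot d\,\epsilon_1 = \tfrac12 d^{3/2}\epsilon_1$. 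Setting $\epsilon_1 \le 2\epsilon / d^{3/2}$ therefore gives $D_{\mathrm{tr}}(\rho,\tilde\rho) \le \epsilon$, which establishes the stated entrywise accuracy requirement.

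With the accuracy target fixed, the second step is to count samples. Here $K = \mathcal{O}(d^2)$ distinct off-diagonal entries are to be estimated (together with $d$ diagonal entries), each to additive error $\epsilon_1$. By Lemma~\ref{lemma:shadow_sampling} (and Theorem~2), estimating $K$ rank-$1$ observables in $\mathcal{O}$ via random Clifford or biased MUB classical shadows costs
\[
N = O\!\left(\frac{\log(K/\delta)}{\epsilon_1^2}\right)
\]
samples; the diagonal entries, by Theorem~\ref{lemma:diagonal_sample_complexity}, cost only $O(\log(d/\delta)/\epsilon_1^2)$, which is subsumed. Substituting $K = \mathcal{O}(d^2)$ gives $\log K = \mathcal{O}(\log d)$, and substituting $\epsilon_1 = \Theta(\epsilon/d^{3/2})$ gives $1/\epsilon_1^2 = \Theta(d^3/\epsilon^2)$, so the total is $N = O\big(d^3 \log d / \epsilon^2\big)$ as claimed (suppressing the $\log(1/\delta)$ factor into the constant at fixed confidence).

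The only genuinely delicate point is the union bound over the growing number of targets: because $\epsilon_1$ shrinks polynomially in $d$ while $K$ grows polynomially in $d$, one must check that the median-of-means guarantee of classical shadow tomography still yields simultaneous control of all $\mathcal{O}(d^2)$ entries with the quoted sample count — but this is exactly what the $\log(K/\delta)$ scaling in Eq.~(\ref{eq:sample-general}) and Lemma~\ref{lemma:shadow_sampling} provides, since $\log(d^2/\delta) = \mathcal{O}(\log(d/\delta))$. A minor additional bookkeeping step, already indicated in the proof of the previous theorem, is to split the per-entry budget $\epsilon_1$ among the three expectation values (the two rank-$1$ observables and the diagonal combination) appearing in the reconstruction formula for $\rho_{jk}$, replacing $\epsilon_1$ by $\epsilon_1/4$; this only affects constants. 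Assembling the norm inequality with the shadow sample bound completes the proof.
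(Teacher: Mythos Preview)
Your proposal is correct and follows essentially the same route as the paper: the paper also bounds $D_{\mathrm{tr}}$ by passing from the trace norm to the Frobenius norm (it writes this out via Cauchy--Schwarz on the singular values $\sqrt{\lambda_i}$ of $\Delta$, which is exactly your inequality $\|\Delta\|_1 \le \sqrt{d}\,\|\Delta\|_2$), then bounds $\|\Delta\|_2^2 \le d^2\epsilon_1^2$ entrywise and substitutes $\epsilon_1 = 2\epsilon/d^{3/2}$ into the $O(\log K/\epsilon_1^2)$ shadow bound with $K = \mathcal{O}(d^2)$. Your additional remarks on the union bound and the $\epsilon_1/4$ splitting are sound and only sharpen the bookkeeping.
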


\begin{proof}
To ensure the trace distance \(D_{\mathrm{tr}}(\rho, \tilde{\rho})\) between \(\rho\) and \(\tilde{\rho}\) satisfies the desired bound, we first consider the trace distance definition:
\[
D_{\mathrm{tr}}(\rho, \tilde{\rho}) = \frac{1}{2} \|\rho - \tilde{\rho}\|_{1} = \frac{1}{2} \mathrm{tr}\left(\sqrt{(\rho - \tilde{\rho})^\dagger (\rho - \tilde{\rho})}\right).
\]
Let the eigenvalues of \((\rho - \tilde{\rho})^\dagger (\rho - \tilde{\rho})\) be \(\lambda_1, \lambda_2, \dots, \lambda_d\). Using the trace property:
\begin{equation}\label{eq:lambda}
   \lambda_1 + \lambda_2 + \cdots + \lambda_d = \mathrm{tr}((\rho - \tilde{\rho})^\dagger (\rho - \tilde{\rho})). 
\end{equation}
Assuming that \(|\tilde{\rho}_{jk} - \rho_{jk}| \leq \epsilon_1\) for all \(j, k\), it follows:
\[
\mathrm{tr}((\rho - \tilde{\rho})^\dagger (\rho - \tilde{\rho}))=\sum_{jk} |\tilde{\rho}_{jk} - \rho_{jk}|^2\leq d^2 \epsilon_1^2.
\]

Using the Cauchy-Schwarz inequality:
\[
\left(\sqrt{\lambda_1} + \cdots + \sqrt{\lambda_d}\right)^2 \leq d \left(\lambda_1 + \cdots + \lambda_d\right),
\]
we derive:
\[
D_{\mathrm{tr}}(\rho, \tilde{\rho}) = \left(\sqrt{\lambda_1} + \cdots + \sqrt{\lambda_d}\right)/2 \leq \frac{\sqrt{d}}{2} \sqrt{\lambda_1 + \cdots + \lambda_d}.
\]
Substituting the bound on \(\lambda_1 + \cdots + \lambda_d\) in Eq. (\ref{eq:lambda}):
\[
D_{\mathrm{tr}}(\rho, \tilde{\rho}) \leq \frac{\sqrt{d}}{2} \sqrt{d^2 \epsilon_1^2} = \frac{d^{3/2} \epsilon_1}{2}.
\]

To ensure \(D_{\mathrm{tr}}(\rho, \tilde{\rho}) \leq \epsilon\), the element-wise error should satisfy \(\epsilon_1 \leq \frac{2\epsilon}{d^{3/2}}\).    
Therefore, we should use DMP-ST to  estimate all the density matrix elements with element-wise error \(|\tilde{\rho}_{jk} - \rho_{jk}| \leq \epsilon_1\), where \(\epsilon_1 = \frac{2\epsilon}{d^{3/2}}\). The sampling complexity is \( O\left(\frac{\log d}{\epsilon_1^2}\right) \) for diagonal elements and \( O\left(\frac{\log (2d^2 -2 d)}{\epsilon_1^2}\right) \) for off-diagonal elements by Theorem 2. 
Combining both contributions and taking  \(\epsilon_1 \leq \frac{2\epsilon}{d^{3/2}}\), the total sampling complexity is:
\[
O\left(\frac{d^3 \log d}{\epsilon^2}\right).
\] 
\end{proof}

This matches the sample complexity of approximate 4-design-based schemes \cite{kueng2017low}, but our measurements and post-processing are simpler and better suited to implementation on near-term quantum devices. Moreover, it approaches the lower bound \(\Omega(d^3 / \epsilon^2)\) for any adaptive or nonadaptive single-copy tomography \cite{chen2023does}. 

\subsection{Computational Complexity Analysis}

The DMP-ST method not only achieves near-optimal sample complexity but also features different classical post-processing. 

In the MUB-based implementation, each observable corresponds to a rank-1 projector with only two nonzero amplitudes. 
In each post-processing of classical shadow tomography, we should calculate $\tr(O\mathcal{M}^{-1}(U_j^{\dagger}|k\rangle\langle k|U_j))$, where $U_j^{\dagger}|k\rangle$ is one projected MUB state. 
Thus, evaluating each estimation  requires only \( \mathcal{O}(\log d) \) operations to access and compute the relevant MUB state amplitudes.

Since we estimate \( \mathcal{O}(d^2) \) matrix elements, and each sample contributes to a single entry with \( \mathcal{O}(\log d) \) cost, the total classical processing over all \( \mathcal{O}(d^3/\epsilon^2) \) samples is:
\[
O\left( \frac{d^5 (\log d)^2}{\epsilon^2} \right).
\]

It is worth highlighting that the post-processing strategy of DMP-ST fundamentally differs from all  traditional quantum state tomography (QST) protocols designed to achieve trace distance accuracy \(\epsilon\). Previous QST approaches typically perform repeated measurements over a fixed informationally complete POVM \(\{E_k\}_{k=1}^{L(d)}\) \cite{flammia2005minimal}, with \(L(d)\) minimized using prior knowledge of quantum state \cite{heinosaari2013quantum}, and estimate the corresponding probabilities \(\{p_k \approx \tr(E_k \rho)\}\)  or related expectation values. The density matrix \(\rho\) is then reconstructed using techniques such as linear inversion, semidefinite programming (SDP), maximum likelihood estimation, or projected least squares—each seeking a best-fit quantum state $\tilde{\rho}$ consistent with the observed data \cite{paris2004quantum}.

Even when the special QST method attains the optimal sample complexity of \(\mathcal{O}(d^3 / \epsilon^2)\), computing the empirical probability distribution \(\{p_k\}\) already incurs \(\mathcal{O}(d^3 / \epsilon^2)\) computational time. This is followed by additional overhead from the reconstruction step to recover the $d\times d$ density matrix $\tilde{\rho}$, which typically involves global optimization procedures with substantial classical cost.

\section{Conclusion}
The direct estimation of multiple entries of a quantum state’s density matrix is a fundamental and practically significant task in quantum information science. However, traditional DMPs often suffer from poor scalability and measurement inefficiency, particularly in high-dimensional settings.

Unlike all previously known direct measurement protocols, the proposed DMP-ST framework enables global estimation: a single batch of measurement outcomes can be reused to estimate arbitrary matrix elements. In contrast, traditional DMPs—including weak measurement schemes, phase-shifting interferometry, and basis-encoded projection methods—typically require a distinct measurement configuration for each new off-diagonal element, and often discard a large portion of data due to post-selection or filtering.

These limitations hinder scalability: estimating \(K\) matrix elements generally demands fresh quantum samples and newly configured measurement settings for each target. DMP-ST overcomes these issues by leveraging the classical shadow framework to uniformly estimate all target entries from a shared dataset. As a result, the sample complexity for estimating \(K\) off-diagonal elements reduces from \(\mathcal{O}(K / \epsilon^2)\) to \(\mathcal{O}(\log K / \epsilon^2)\), with a corresponding exponential reduction in the number of required measurement settings for $K$ large enough.

A key advantage of DMP-ST is its reusability: once a sufficiently large dataset is collected to ensure accuracy \(\epsilon\) and confidence level \(1 - \delta\), no further measurements are required to estimate any additional off-diagonal matrix elements. 

We further show that estimating all \(d^2\) matrix entries to additive accuracy \(\epsilon_1 = \mathcal{O}(\epsilon / d^{3/2})\) guarantees trace distance error \(D_{\mathrm{tr}}(\rho, \tilde{\rho}) \le \epsilon\), with total sample complexity \(\mathcal{O}(d^3 \log d / \epsilon^2)\), closely matching the theoretical lower bound for single-copy measurements for quantum state tomography.

In addition, we analyze biased MUB measurements within the classical shadow framework and prove that, for the observables used in DMP-ST, the shadow norm remains bounded by 3/2. As a result, the sample complexity is reduced to one-half of that required by random Clifford-based classical shadow tomography method. 

In summary, DMP-ST with randomly Clifford measurements or biased MUB measurements provides a scalable and near-optimal framework for reconstructing individual density matrix elements and performing full quantum state tomography, bridging rigorous theoretical guarantees with practical efficiency in high-dimensional quantum systems.

\textbf{Acknowledgements---}
This work received support from the National Natural Science Foundation of China through Grants No. 62001260 and No. 42330707, and from the Beijing Natural Science Foundation under Grant No. Z220002.   %

\end{document}